\def\BibTeX{{\rm B\kern-.05em{\sc i\kern-.025em b}\kern-.08em
    T\kern-.1667em\lower.7ex\hbox{E}\kern-.125emX}}
\def\bea{\begin{eqnarray}}
\def\eea{\end{eqnarray}}
\newtheorem{theorem}{Theorem}
\newtheorem{prop}{Proposition}
\begin{document}

\title{Feature Map for Quantum Data in Classification}

\author{
\IEEEauthorblockN{1\textsuperscript{st} Hyeokjea Kwon}
\IEEEauthorblockA{\textit{School of Electrical Engineering} \\
\textit{Korea Adv. Inst. of Sci. and Tech.}\\
Daejeon, Yuseong-gu \\
hyukjae.kwon94@gmail.com}
\and
\IEEEauthorblockN{2\textsuperscript{nd} Hojun Lee}
\IEEEauthorblockA{\textit{School of Electrical Engineering} \\
\textit{Korea Adv. Inst. of Sci. and Tech.}\\
Daejeon, Yuseong-gu \\
quantum0430@kaist.ac.kr}
\and
\IEEEauthorblockN{3\textsuperscript{rd} Joonwoo Bae}
\IEEEauthorblockA{\textit{School of Electrical Engineering} \\
\textit{Korea Adv. Inst. of Sci. and Tech.}\\
Daejeon, Yuseong-gu \\
bae.joonwoo@gmail.com}
}

\maketitle

\begin{abstract}
The kernel trick in supervised learning signifies transformations of an inner product by a feature map, which then restructures training data in a larger Hilbert space according to an endowed inner product. A quantum feature map corresponds to an instance with a Hilbert space of quantum states by fueling quantum resources to machine learning algorithms. In this work, we point out that the quantum state space is specific such that a measurement postulate characterizes an inner product and that manipulation of quantum states prepared from classical data cannot enhance the distinguishability of data points. We present a feature map for quantum data as a probabilistic manipulation of quantum states to improve supervised learning algorithms. 
\end{abstract}

\begin{IEEEkeywords}
Quantum machine learning, Feature maps, Classification
\end{IEEEkeywords}

\section{Introduction}
In supervised learning, one aims to construct a model that makes predictions based on training data. Recently, the framework has begun to apply the laws of quantum mechanics and quantum machine learning, to fuel nonclassical properties such as entanglement and superposition to machine learning (ML) algorithms for further advantages, see e.g., \cite{article_qml, book_slqc, PhysRevLett.117.130501}. One way to apply Quantum Information Theory (QIT) in ML is to process ML algorithms with quantum states prepared according to classical data. 

From the view of QIT, the state preparation rephrases embedding classical data to quantum systems. Technically, the space of quantum states is described by a Hilbert space where the measurement postulate, called the Born rule, specifies an inner product \cite{book_mfqm}. From the view of ML, embedding data in a Hilbert space corresponds to a feature map: its quantum application is called a {\it quantum feature map} \cite{PhysRevLett.122.040504, schuld2021supervised}. Then, the resulting quantum states and the Hilbert space are referred to as feature vectors and a feature space, respectively. 

On the one hand, a quantum feature map allows one to exploit quantum resources such as entanglement and superposition existing in quantum states to enhance ML algorithms. As a result, one may envisage quantum advantages over classical counterparts. On the other hand, one notices that the quantum state space is a specific and restricted object. It is a Hilbert space entirely characterized by the postulates of quantum theory \cite{book_mfqm}. 

The consequences show that once a feature map prepares quantum states, quantum operations are contractive, i.e., the norm of feature vectors does not increase \cite{article_cptpm}. Moreover, a mathematical space describing quantum states is not hypothetical: the measurement postulate defines an inner product uniquely in the space, known as the Gleason theorem \cite{49efcab3-5adb-3b03-b361-7c2f8afacf95, PhysRevLett.91.120403}. The uniqueness implies limitations on the so-called {\it kernel tricks} in the quantum feature space. Apart from the fact the quantum state preparation corresponds to a feature map {\it per se}, little is known about how quantum principles can be incorporated into feature vectors to enhance ML algorithms.

In this work, we show that once quantum states are prepared for an ML algorithm, their distinguishability does not increase by a feature map. In contrast to classical ML algorithms, quantum data cannot be manipulated such that their distinguishability is enhanced. We then present a general manipulation of quantum data, namely a {\it feature map for quantum data}, by relaxing the trace-preserving condition, as a versatile tool to improve ML algorithms. We also develop a quantum circuit construction of a feature map for quantum data and demonstrate its advantages in supervised learning for binary classification in different datasets.

\section{Quantum classification}

Let us write by $\mathcal{D}$ a dataset,
\begin{equation}
\mathcal{D}=\{ (x_1, y_1), \cdots, (x_M, y_M) \},\nonumber
\end{equation}
called a training dataset, where $x_m\in\mathcal{X}$ are data and $y_m\in\mathcal{Y}$ are their labels for classification. A supervised ML algorithm constructs a classifier $f\in\mathcal{F}$ from a provided dataset, aiming to make a classification with a higher precision for an item $x_m$ apart from a dataset is given. The statistical learning theory then introduces an optimization problem,
\begin{equation}
\underset{f\in\mathcal{F}}{\arg \min}~ \hat{\mathcal{R}}_{L, \mathcal{D}}(f) + g(\| f \|_{\mathcal{F}}), \label{eq:SupLearning}
\end{equation}
where $\hat{\mathcal{R}}_{L, \mathcal{D}}(f)$ is an empirical risk of a function $f$, and $g$ is a regularization function that takes a list of constraints, which should be fulfilled, into account \cite{hastie01statisticallearning}. The empirical risk can be expressed by a loss function, denoted by $L$, which relies on a dataset $\{x_m \}$ and classification $\{ y_m\}$ as follows,
\begin{equation}
\hat{R}_{L, \mathcal{D}}(f) = \frac{1}{M}\sum_{m=1}^{M}L(f(x_m), y_m). \nonumber
\end{equation}
An instance of a loss function may be given by a distance, e.g., $L(f(x_m), y_m) = |f(x_m)-y_m|^2$. Thus, an optimization problem of an empirical risk can be equivalently formulated by minimizing an average loss function.

A useful technique in the optimization problem \eqref{eq:SupLearning} can be facilitated, which is to devise a mapping from the data $\mathcal{X}$ to a feature space $\mathcal{F}$ as follows,
\begin{equation}
\phi:\mathcal{X} \to \mathcal{F}. \nonumber
\end{equation}
Then, $x_m\in\mathcal{X}$ results in a feature vector $\phi(x_m) \in \mathcal{F}$. An inner product of feature vectors defines a kernel, 
\begin{equation}
\kappa : \mathcal{X} \times \mathcal{X} \to \mathbb{R}, ~~\mathrm{where} ~~
\kappa(x_n, x_m) = \langle \phi(x_n), \phi(x_m) \rangle_{\mathcal{F}}. \nonumber
\end{equation}
The Moore-Aronszajn theorem tells the uniqueness of a reproducing kernel Hilbert space (RKHS) for a given kernel. Therefore, one can rephrase that a feature map $\phi$ introduces a unique RKHS. %The following representer theorem reveals the form of the optimal solution 

\begin{theorem}[Representer Theorem \cite{article_grt}]
Let $\mathcal{D}$ be a dataset consisting of pairs $(x_m, y_m) \in \mathcal{X} \times \mathcal{Y}$, $\kappa: \mathcal{X} \times \mathcal{X} \to \mathbb{R}$ be a kernel, and $f: \mathcal{X} \to \mathbb{R}$ be a class of model functions in the RKHS. Then the optimal solution of \eqref{eq:SupLearning} can be expressed as
\begin{equation}
f(x) = \sum_{m=1}^{M} \alpha_m \kappa(x_m, x), \label{eq:RepTheorem}
\end{equation}
where $\alpha_m\in\mathbb{R}$ are real parameters.
\end{theorem}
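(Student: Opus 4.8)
The plan is to follow the classical variational argument due to Schölkopf, Herbrich, and Smola, adapting it to the notation of \eqref{eq:SupLearning}. First I would invoke the Moore--Aronszajn correspondence already recalled above: the kernel $\kappa$ fixes a unique RKHS $\mathcal{H}_\kappa$ with reproducing property $f(x) = \langle f, \kappa(x,\cdot)\rangle_{\mathcal{H}_\kappa}$ for every $f\in\mathcal{H}_\kappa$ and every $x\in\mathcal{X}$. The key structural fact is that the finite set of data points $\{x_1,\dots,x_M\}$ singles out a finite-dimensional subspace
\begin{equation}
\mathcal{S} = \mathrm{span}\{\kappa(x_1,\cdot),\dots,\kappa(x_M,\cdot)\} \subseteq \mathcal{H}_\kappa, \nonumber
\end{equation}
and its orthogonal complement $\mathcal{S}^{\perp}$ inside $\mathcal{H}_\kappa$.

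The central step is the orthogonal decomposition. Any candidate $f\in\mathcal{H}_\kappa$ can be written uniquely as $f = f_{\parallel} + f_{\perp}$ with $f_{\parallel}\in\mathcal{S}$ and $f_{\perp}\in\mathcal{S}^{\perp}$, so that $f_{\parallel} = \sum_{m=1}^{M}\alpha_m \kappa(x_m,\cdot)$ for some real coefficients $\alpha_m$. I would then evaluate $f$ on a training point: by the reproducing property and orthogonality,
\begin{equation}
f(x_n) = \langle f_{\parallel}+f_{\perp}, \kappa(x_n,\cdot)\rangle_{\mathcal{H}_\kappa} = \langle f_{\parallel}, \kappa(x_n,\cdot)\rangle_{\mathcal{H}_\kappa} = f_{\parallel}(x_n), \nonumber
\end{equation}
since $\kappa(x_n,\cdot)\in\mathcal{S}$ annihilates $f_{\perp}$. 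Hence the empirical risk $\hat{\mathcal{R}}_{L,\mathcal{D}}(f)$, which depends on $f$ only through the values $f(x_1),\dots,f(x_M)$, is unchanged if one replaces $f$ by $f_{\parallel}$. For the regularizer, I would use the Pythagorean identity $\|f\|_{\mathcal{H}_\kappa}^2 = \|f_{\parallel}\|_{\mathcal{H}_\kappa}^2 + \|f_{\perp}\|_{\mathcal{H}_\kappa}^2 \ge \|f_{\parallel}\|_{\mathcal{H}_\kappa}^2$, so that if $g$ is nondecreasing then $g(\|f\|_{\mathcal{H}_\kappa}) \ge g(\|f_{\parallel}\|_{\mathcal{H}_\kappa})$. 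Combining the two observations, the objective in \eqref{eq:SupLearning} evaluated at $f_{\parallel}$ is no larger than at $f$, so there is always a minimizer lying in $\mathcal{S}$, which is exactly the claimed form \eqref{eq:RepTheorem}.

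I expect the main subtlety to be not the algebra but the hypotheses one must make explicit: the argument needs $g$ to be a nondecreasing (or strictly increasing, for uniqueness) function of the RKHS norm, and it tacitly assumes $f$ is genuinely an element of $\mathcal{H}_\kappa$ so that the orthogonal projection onto the finite-dimensional $\mathcal{S}$ is well defined and $\mathcal{S}$ is closed. If strict monotonicity of $g$ is assumed, the inequality $g(\|f\|) \ge g(\|f_{\parallel}\|)$ is strict whenever $f_{\perp}\neq 0$, forcing every minimizer to have $f_{\perp}=0$; without it one only gets existence of a minimizer of the stated form. I would state the theorem with the nondecreasing-$g$ hypothesis, give the decomposition-and-Pythagoras argument in three short lines, and remark that the regularization term $g(\|f\|_{\mathcal{F}})$ in \eqref{eq:SupLearning} is precisely what makes the reduction to the span of the data work — the loss term alone cannot see beyond the sampled points.
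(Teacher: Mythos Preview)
Your argument is the standard Sch\"olkopf--Herbrich--Smola proof and is correct, including your careful identification of the implicit hypotheses (that $g$ be nondecreasing in the RKHS norm and that $f$ genuinely lie in $\mathcal{H}_\kappa$). However, the paper does not supply its own proof of this theorem: it simply states the result with a citation to \cite{article_grt} and moves on, so there is nothing in the paper to compare your proposal against beyond noting that you have reproduced precisely the argument of the cited reference.
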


\subsection*{Quantum binary classification}

\begin{figure}[t]
	\begin{center}
		\includegraphics[angle=0, width=.48 \textwidth]{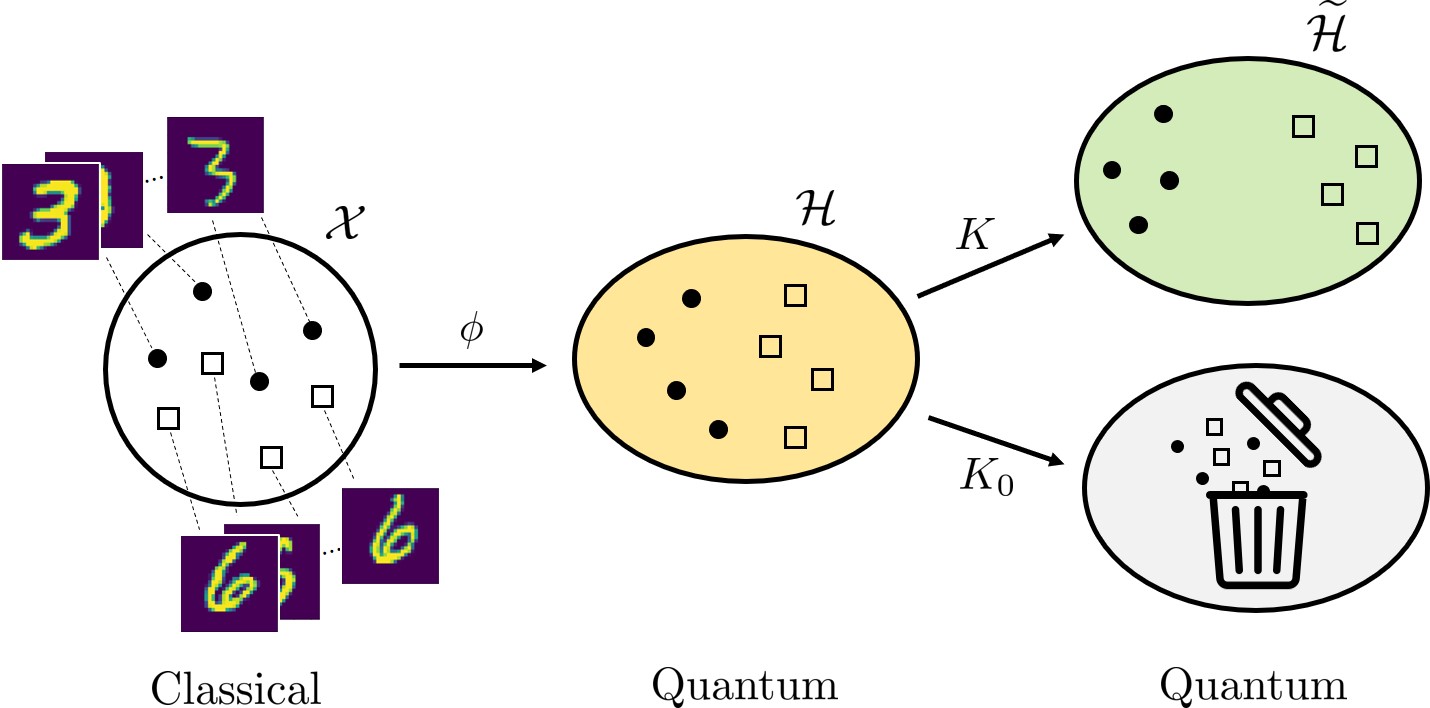}
		\caption{ A quantum embedding $\phi$ corresponds to the preparation quantum states according to classical data, after which distances among quantum data can be defined by the inner product on the state space, a Hilbert space. We introduce a feature map, a probabilistic transformation characterized a Kraus operator $K$ on quantum states; unsuccessful instances by the other one $K_0 = \sqrt{I- K^{\dagger} K}$ are discarded. Once the feature map is successful, distances among quantum data may increase and thus states are better distinguishable.  } \label{fig:ProbManipulation}
	\end{center}	
\end{figure}
 
Let a quantum circuit $U_{\phi}$ prepares a quantum state $\rho_{\phi}(x_m)$ with a data $x_m$. Then, we can define a classical-to-quantum feature map that encodes the classical data to the quantum state. This process is called quantum encoding or quantum embedding.
\begin{equation}
\phi : x_m\in\mathcal{X} \to \rho_{\phi}(x_m)\in S(\mathcal{H}),
\end{equation}
where $\rho_{\phi}(x_m)$ is a quantum state on a Hilbert space $\mathcal{H}$ and $S(\mathcal{H})$ denotes the set of quantum state. This feature map defines a kernel on the Hilbert space, where the inner product is restricted by the measurement postulate.
\begin{equation}
\kappa(x_n, x_m) = \mathrm{tr}[\rho_\phi(x_n)\rho_\phi(x_m)]
\end{equation}
Therefore, the Hilbert space is the RKHS that can deal with supervised ML by the representer theorem.

Consider the binary classification problem where it has the data $\mathcal{X}=\mathbb{R}^{N}$ and the label set $\mathcal{Y} = \{1, -1\}$. In this scenario, we will only consider the empirical risk in \eqref{eq:SupLearning}. Denote the two classes A and B in the data $\mathcal{X}$ that have labels $y_m=1$ and $y_m=-1$, respectively. Then, we can define the ensemble of two quantum states of each class
\begin{equation}
\rho = \frac{1}{M_A}\sum_{y_m=1}\rho_{\phi}(x_m), ~~~ \sigma = \frac{1}{M_B}\sum_{y_m=-1}\rho_{\phi}(x_m), \label{eq:OriState}
\end{equation}
where $M_A = |\{ x_m:y_m=1 \}|$ and $M_B = |\{ x_m:y_m=-1 \}|$ that are the number of the data in each class, respectively. From \eqref{eq:RepTheorem}, we obtain the fidelity classifier by setting parameters $\alpha_m = 1/M_A$ and $\alpha_m = -1/M_B$ when $y_m=1$ and $y_m=-1$, respectively \cite{lloyd2020quantum}.
\begin{equation}
f_{\mathrm{fid}}(x) = \mathrm{tr}[(\rho-\sigma)\rho_{\phi}(x)] \label{eq:FidClassifier}
% f_{\mathrm{fid}}(x) = \frac{1}{M_A}\sum_{y_m=1}\kappa(x_m, x) - \frac{1}{M_B}\sum_{y_m=-1}\kappa(x_m, x)
\end{equation}
The fidelity classifier quantifies how close a test state $\rho_{\phi}(x)$ is to an ensemble $\rho$ or $\sigma$. For instance, for $x$ such that $f_{\mathrm{fid}} (x) >0$, we conclude a state $\rho_{\phi}(x)$ closer to an ensemble $\rho$.

Consider the empirical risk with a \textit{weighted linear function},
\begin{equation}
\begin{aligned}
L_w(f(x_m), y_m) = -w_m f(x_m) y_m, ~\mathrm{where} \\
w_m=
\begin{cases} 
M/M_A ~ \mathrm{if} ~ y_m=1 \\
M/M_B ~ \mathrm{if} ~ y_m=-1. \label{eq:9}
\end{cases}
\end{aligned}
\end{equation}
Note that weights $\{ w_m \}$ can be used to adjust loss differences such that the empirical risk of the fidelity classifier would be given as follows,
\begin{equation}
\hat{R}_{L_w, \mathcal{D}}(f_{\mathrm{fid}}) = -D_{\mathrm{hs}}(\rho, \sigma)  \nonumber
\end{equation}
where the Hilbert-Schmidt distance $D_{\mathrm{hs}}$ is defined by $\mathrm{tr}[(\rho-\sigma)^2]$. Hence, a quantum embedding that maximizes the Hilbert-Schmidt distance of two ensembles $\rho$ and $\sigma$ builds a fidelity classifier with the least empirical risk \cite{lloyd2020quantum}. 

\begin{figure*}[t]
	\begin{center}
		\includegraphics[angle=0, width=.86 \textwidth]{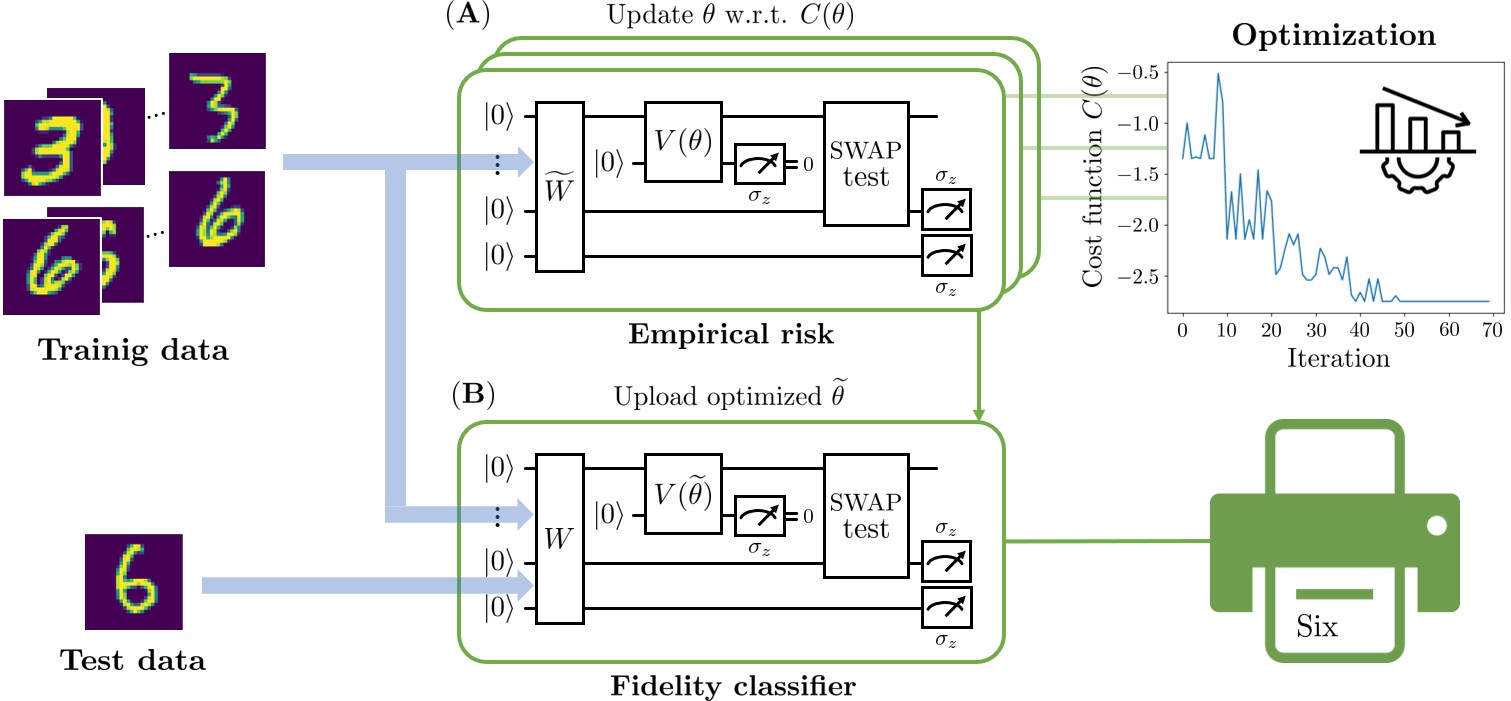}
		\caption{The schematic of quantum binary classification with a proposed feature map is shown; the circuit structure is detailed in Fig. \ref{fig:QuaCircuit}. (A) A unitary transform $\widetilde{W}$ implements quantum embedding according to training data. Then, the quantum data may be transformed by a feature map, realized by a gate $V(\theta)$ over a system and ancillary qubits such that a transform on system qubits is accepted only when an ancillary one gives an outcome $0$. An empirical risk is computed via a SWAP test. Note that a gate $V(\theta)$ is trained to minimize a cost function $C(\theta)$ in (\ref{eq:cost}). (B) A fidelity classifier is realized as follows. A unitary transform $W$ implements state preparation for training and test data. A feature map with $V(\widetilde{\theta})$ over the system and ancillary qubits is applied. Then, a SWAP test is performed for the states resulting from a feature map.} \label{fig:OverScheQuanBinaClassification}
	\end{center}	
\end{figure*}

However, the feature space of a quantum system has restricted the inner product by a measurement postulate. This implies the limitations of a quantum embedding that contrasts with the original feature map, the quantum data cannot be repeatedly embedded in some other feature space such that they are structured with a higher distinguishability.

\begin{prop}
Quantum data cannot be embedded in a feature space with enhanced distinguishability
\end{prop}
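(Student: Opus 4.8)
The plan is to pin down what a feature map \emph{on quantum data} can be, and then reduce the statement to the contractivity of such maps. Since the inputs are already states in $S(\mathcal{H})$, any further embedding must send states to states of some Hilbert space $\mathcal{H}'$, and, to remain a legitimate transformation even when the data register is entangled with an environment, it must be completely positive. The most general \emph{deterministic} feature map on quantum data is therefore a quantum channel, i.e.\ a completely positive trace-preserving (CPTP) map $\Phi:S(\mathcal{H})\to S(\mathcal{H}')$. I would make this the first step and observe that dropping trace preservation is exactly the probabilistic relaxation introduced next (Fig.~\ref{fig:ProbManipulation}); so the Proposition is really the assertion that the deterministic case buys nothing.

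Second, I would fix the figure of merit: operationally, the distinguishability of the two class ensembles $\rho,\sigma$ of \eqref{eq:OriState} is the optimal one-shot discrimination probability $\tfrac12\big(1+D_{\mathrm{tr}}(\rho,\sigma)\big)$, with $D_{\mathrm{tr}}(\rho,\sigma)=\tfrac12\|\rho-\sigma\|_1$ the trace distance (equivalently one may use the fidelity). Then I would invoke the data-processing inequality: for every CPTP map $\Phi$,
\begin{equation}
D_{\mathrm{tr}}\big(\Phi(\rho),\Phi(\sigma)\big)\ \le\ D_{\mathrm{tr}}(\rho,\sigma).\nonumber
\end{equation}
A self-contained argument uses the variational form $D_{\mathrm{tr}}(\rho,\sigma)=\max_{0\le E\le I}\mathrm{tr}[E(\rho-\sigma)]$ together with the fact that the adjoint $\Phi^{\dagger}$ is completely positive and unital, so $0\le E\le I$ implies $0\le\Phi^{\dagger}(E)\le I$ while $\mathrm{tr}[E\,\Phi(X)]=\mathrm{tr}[\Phi^{\dagger}(E)X]$; measuring after $\Phi$ is a special case of measuring before it, so the achievable bias cannot grow. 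This is the contractivity of quantum operations already cited in the text~\cite{article_cptpm}. Combined with the first step, this gives the Proposition for the operational notion of distinguishability: no deterministic feature map makes the two class ensembles easier to discriminate.

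The step I expect to be the main obstacle is not the inequality but reconciling the \emph{statement} with the quantities the paper actually uses. The kernel induced by a quantum embedding is $\mathrm{tr}[\rho_\phi(x_n)\rho_\phi(x_m)]$, so the natural metric is the Hilbert-Schmidt distance $D_{\mathrm{hs}}$, and the empirical risk was written as $-D_{\mathrm{hs}}(\rho,\sigma)$; however, $D_{\mathrm{hs}}$ is \emph{not} monotone under arbitrary CPTP maps, only under unital ones. I would handle this by (i) reading ``distinguishability'' in the Proposition as the operational trace-distance / discrimination probability, for which monotonicity is unconditional, and remarking that a non-unital map that merely inflates $D_{\mathrm{hs}}$ has not actually improved discrimination; and (ii) noting that the construction of Fig.~\ref{fig:OverScheQuanBinaClassification} can be taken within the unital family, for which the empirical risk $-D_{\mathrm{hs}}(\rho,\sigma)$ is also non-decreasing under the map. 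A secondary point requiring care is the modeling assumption in the first step that ``a physical transformation of quantum data'' equals a CPTP map: this is standard, but it is the hinge on which the whole no-go rests, and it is precisely what the next section relaxes by admitting the trace-nonincreasing Kraus operator $K$ of Fig.~\ref{fig:ProbManipulation}.
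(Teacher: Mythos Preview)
Your proposal is correct and lands on the same core mechanism as the paper: a deterministic feature map on quantum data is a CPTP map, and CPTP maps contract the $L_1$/trace distance, hence cannot increase operational distinguishability. The paper's own proof is terser and organized as a two-case split: (i) mapping quantum data directly to a classical feature space via measurements, handled by appeal to state-discrimination bounds, and (ii) mapping into a (possibly larger) quantum Hilbert space via a channel $\Lambda$, handled by the contractivity $\|\Lambda[X]\|_1\le\|X\|_1$ for $X=p_1\rho-p_2\sigma$. Your unified CPTP treatment subsumes case~(i) as a quantum-to-classical channel and supplies a self-contained variational proof the paper only cites. The one place you go beyond the paper is your explicit discussion of the mismatch between the $L_1$ norm used in the no-go and the Hilbert--Schmidt distance used for the empirical risk; the paper does not comment on this, so your remark that $D_{\mathrm{hs}}$ is monotone only under unital maps, and that ``distinguishability'' in the Proposition should be read operationally, is a genuine clarification rather than a deviation.
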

\begin{proof}
We first consider cases that map quantum data directly back to a classical feature space by measurements. It is clear that non-orthogonal states cannot be perfectly distinguished, and thus the mapping introduces either an error \cite{Helstrom:1969fri} or ambiguous outcomes \cite{IVANOVIC1987257, DIEKS1988303, PERES198819}, see also \cite{Bergou_2007, Bae_2015}. There are fundamental limitations in the bounds \cite{PhysRevA.77.012113, PhysRevLett.107.170403, 10.1063/1.3298647}

Or, one can consider {\it a feature map for quantum data} as embedding quantum data to high-dimensional quantum systems and then a measurement. A quantum channel, denoted by $\Lambda : \mathcal{H} \rightarrow \widetilde{\mathcal{H}}$ for $\mathrm{dim}(\mathcal{H}) \leq \mathrm{dim}(\widetilde{\mathcal{H}})$, corresponds to a positive and completely positive map for quantum states. Such a map does not increase pairwise distinguishability: for two states $\rho$ and $\sigma$ appearing with probabilities $p_1$ and $p_2$, it holds that
\begin{equation}
\| \Lambda[X] \|_1 \leq \| X \|_1,~\mathrm{for}~X= p_1 \rho - p_2 \sigma \ngeq 0
\end{equation}
where $\|\cdot \|_1$ denotes the $L_1$ norm, i.e., $\|A \|_1 = \mathrm{tr}\sqrt{A^{\dagger}A}$.
\end{proof}

\section{Probabilistic manipulation}
\begin{figure*}[t]
	\begin{center}
		\includegraphics[angle=0, width=1. \textwidth]{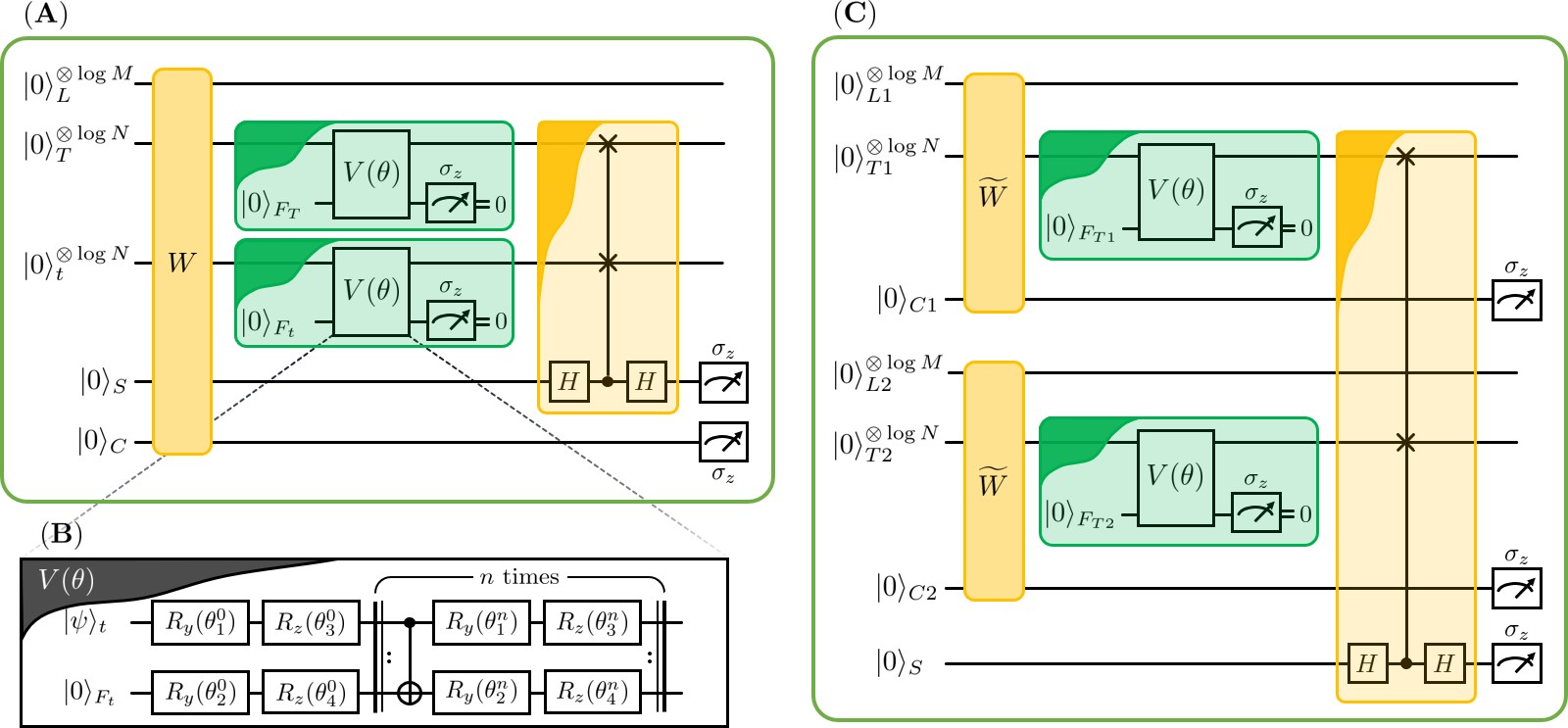}
		\caption{ (A) A quantum circuit for a fidelity classifier in \eqref{eq:NewFidClassifier}  with a proposed feature map is shown. Let $T$ denote the register with training data and $t$ the one with test data. A unitary transformation $W$ implements a state preparation in \eqref{eq:InitState}. A feature map for quantum data corresponds to a probabilistic transformation for each training and test data; the map is realized only when measurement outcomes in both registers $T$ and $t$ are $0$. A parameter $\theta$ in an interaction $V(\theta)$ is trained to optimize a feature map. Then, a Hadamard classifier is implemented with a SWAP test. (B) An interaction $V(\theta)$ for realizing a feature map is trained by parameterized quantum circuits, single- and two-qubit gates. (C) A quantum circuit for an empirical risk with a proposed feature map shares a similar structure to the circuit in (A). It then prepares a state for the training data by a gate $\widetilde{W}$ in \eqref{eq:EmpInitState}. } \label{fig:QuaCircuit}
	\end{center}	
\end{figure*}

We put a step forward to relaxing quantum channels to quantum filtering operations and present a general form of {\it a feature map for quantum data} to enhance ML algorithms with quantum states. To this end, we recall that a quantum channel can be described as the dynamics of a subsystem that interacts with an ancilla through a unitary transformation. When a state and an ancilla initialized in $\rho_{S}\otimes |a\rangle_A\langle a|$ result in $U_{SA} (\rho_S \otimes |a \rangle_A\langle a |) U_{SA}^{\dagger}$ for some interaction $U_{SA}$, a measurement on an ancilla in an orthonormal basis $\{|i\rangle_A \}$ finds the probability of having an outcome $i$,
\begin{equation}
p_A (i) = \mathrm{tr}[U_{SA} \rho_S \otimes |a \rangle_A\langle a | U_{SA}^{\dagger} |i \rangle_A\langle i |]. 
\end{equation}
Then, the resulting state of a system can be described by
\begin{equation}
\rho_S  (i) =\frac{1}{p_A (i)} {K_i} \rho K_{i}^{\dagger}~~\mathrm{where} ~K_i= _A\langle i | U_{SA} |a \rangle_A \label{eq:kraus}
\end{equation}
where $\{ K_i\}$ are called Kraus operators, satisfying the relation $\sum_i K_{i}^{\dagger}K_i  =\mathbb{I}$. 

It is worth mentioning that probabilistic manipulations of quantum states by exploiting Kraus operators have been used to resolve non-trivial problems in various contexts of quantum information applications. In entanglement theory, two-qubit entangled states can be transformed to a more entangled one with some probability by local operations and classical operations, called local filtering \cite{PhysRevA.64.010101, PhysRevLett.89.170401}. The protocol for distilling entanglement can be rephrased as a sequence of local filtering operations \cite{PhysRevLett.76.722, PhysRevLett.77.2818}. Local filtering operations can also reveal hidden nonlocality existing in some entangled states \cite{PhysRevLett.74.2619}. In experiments, a post-selection technique can be described by Kraus operators. For instance, one way to demonstrate a quantum gate with photonic qubits which hardly interact with each other is to select particular measurement outcomes whenever photon-photon interactions were successful, see e.g., \cite{PhysRevLett.107.160401}.

We now present {\it a feature map for quantum data} via Kraus operators as a probabilistic strategy of manipulating quantum states to enhance ML algorithms, see Fig. \ref{fig:ProbManipulation}. Let $K $ and $K_{0}$ denote two Kraus operators such that $K $ describes a desired transformation of quantum data and $K_0  = (\mathbb{I} - K ^\dag K )^{1/2}$ otherwise. Then, quantum data previously prepared by a quantum embedding $\phi$ denoted by $D_{\phi} = \{  (\rho_{\phi} (x_m), y_m)  \} $ can be transformed to $\widetilde{D}_{\phi} = \{  (\widetilde{\rho}_{\phi} (x_m), y_m) \} $ such that
\begin{equation}
\begin{aligned}
K: {\rho}_{\phi} (x_m) \mapsto   \widetilde{\rho}_{\phi} (x_m), ~ \mathrm{where} \\
\widetilde{\rho}_{\phi} (x_m)  = \frac{  K \rho_{\phi} (x_m) K^{\dagger}}{p(x_m)}, ~\mathrm{with} \\
p_s(x_m) = \mathrm{tr}[ K^{\dagger}K \rho_{\phi} (x_m) ]. \label{eq:fq}
\end{aligned}
\end{equation}
From Proposition, we can safely restrict the consideration to the case $\dim ( {\mathcal{H}}) = \dim ( \widetilde{\mathcal{H}})$ due to no advantage of utilizing a larger Hilbert space of quantum states. It is straightforward to see that, once the transformation is successful, a Kraus operator also leads to a kernel: 
\begin{equation}
\kappa(  \rho_{\phi}(x_n) , \rho_{\phi} (x_m) ) = \mathrm{tr} [ \widetilde{\rho}_{\phi}(x_n) \widetilde{\rho}_{\phi}(x_{m}) ]
\end{equation}
Therefore, the task to enhance ML algorithms for quantum data is to identify a Kraus operator $ K $ for manipulating quantum sample data beyond unitary transformations. If no enhancement occurs, one returns a trivial choice $K = \mathbb{I}$.

One may assert the weakness that a feature map in \eqref{eq:fq} is probabilistic so that for a large number of data points, the probability of its realization $ p_s(x_1)\times p_s(x_2)\times \cdots \times p_s(x_M)$ quickly falls to zero. A prescription is to construct a collective interaction over blocks of quantum data, a Kraus operator for quantum states $\rho_\phi(x_m)$.
\begin{equation}
K': \bigotimes_{m=1}^M {\rho}_{\phi} (x_m) \mapsto \bigotimes_{m=1}^M \widetilde{\rho}_{\phi} (x_m)
\end{equation}
However, the map above can be implemented with ancillary qubits that amount to given quantum data; a large set of resources is additionally required. Instead, we consider a Kraus operator for an ensemble of quantum data collectively,
\begin{equation}
K'': \sum_{m=1}^M {\rho}_{\phi} (x_m) \mapsto \sum_{m=1}^M  \frac{p_s(x_m)}{p_{\mathrm{succ}}} \cdot \widetilde{\rho}_{\phi} (x_m), \label{eq:KraForm}
\end{equation}
where $p_{\mathrm{succ}} = p_s(x_1, \dots, x_M)$ is a success probability that a Kraus operator is realized. The advantage is to reduce resources while maintaining a high enough probability. A consequence that is, in fact, not desired is to have new weights $p_s(x_m)/p_{\mathrm{succ}}$ according to success probabilities. We reiterate that the price to pay for a non-vanishing probability and resource advantage is to build a Kraus operator $K''$ for interaction among all data points.

\begin{figure*}[t]
	\begin{center}
		\includegraphics[angle=0, width=1. \textwidth]{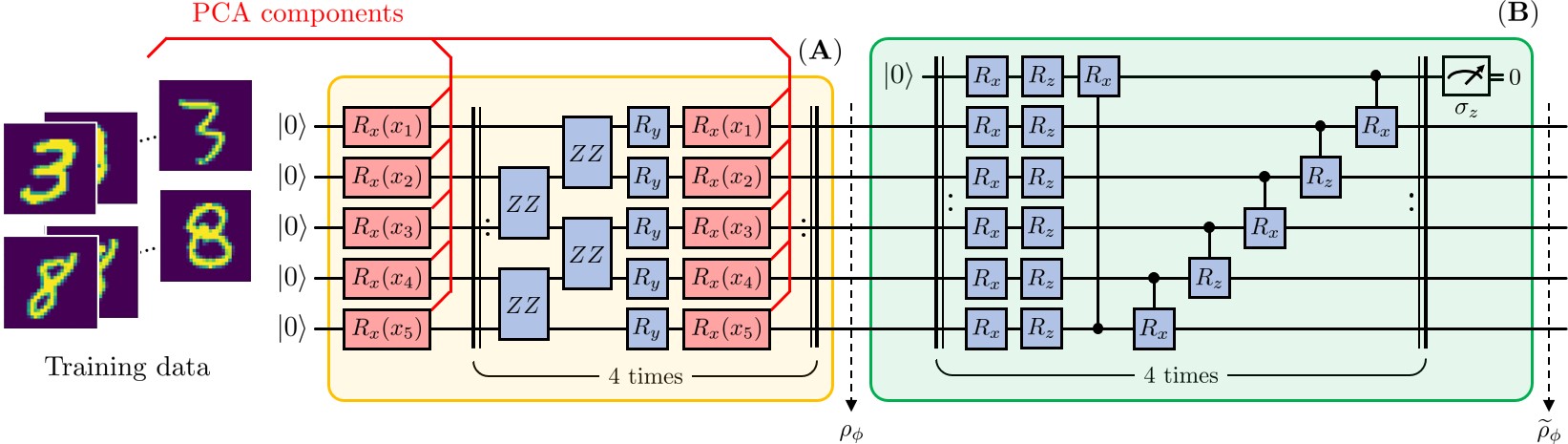}
		\caption{ (A) Principal component analysis (PCA) components $(x_1, \cdots, x_5)$ are extracted from handwritten images $"3"$ and $"8"$ in the MNIST dataset. A state $\rho_\phi$ is prepared after single-qubit rotations $R_x$ with PCA components followed by parameterized gates $R_y$ and $ZZ$. Parameterized gates are blue colored. (B) A feature map for quantum data realizes a non-unitary transform producing a state $\widetilde{\rho}_\phi$. The transformation is probabilistic since it is realized only when a single-qubit ancilla gives an outcome $0$. The feature map is implemented by parameterized single-qubit rotation $R_x$ and $R_z$ as well as controlled-$R_x$ gates. All these steps of quantum data manipulation are applied to the fashion-MNIST and cifar10 datasets in different PCA components. } \label{fig:SimQuaCircuit}
	\end{center}	
\end{figure*}

Then, by a Kraus operator in (\ref{eq:KraForm}), ensembles given in the beginning in \eqref{eq:OriState} are transformed as follwos,
\begin{equation}
\begin{aligned}
\widetilde{\rho} &= \sum_{m: y_m=1} \frac{p_s(x_m)}{p_s(y_m=1)} \widetilde{\rho}_{\phi}(x_m), \\
\widetilde{\sigma} &= \sum_{m: y_m=-1} \frac{p_s(x_m)}{p_s(y_m=-1)} \widetilde{\rho}_{\phi}(x_m), \label{eq:NewState}
\end{aligned}
\end{equation}
where for $i\in\{-1, 1\}$ 
%Note that $p_s(y_m=1)$ and $p_s(y_m=-1)$ are the sum of filtering probabilities when the data has labels $y_m=1$ and $y_m=-1$, respectively.
\begin{equation}
p_s(y_m=i) = \sum_{m: y_m=i}p_s(x_m). \nonumber
\end{equation}
From the representer theorem in \eqref{eq:RepTheorem}, we obtain a fidelity classifier 
\begin{equation}
\widetilde{f}_{\mathrm{fid}}(x) = \mathrm{tr}[(\widetilde{\rho}-\widetilde{\sigma})\widetilde{\rho}_{\phi}(x)] \label{eq:NewFidClassifier}
\end{equation}
for which note that have set parameters in \eqref{eq:RepTheorem} as follows,
\begin{equation}
\begin{aligned}
\alpha_m=
\begin{cases} 
p_s(x_m)/p_s(y_m=1) ~ &\mathrm{if} ~ y_m=1 \nonumber\\
p_s(x_m)/p_s(y_m=-1) ~ &\mathrm{if} ~ y_m=-1. \nonumber
\end{cases}
\end{aligned}
\end{equation}
From (\ref{eq:9}), a weighted linear function is given by,
\begin{equation}
\begin{aligned}
L_{\widetilde{w}}(f(x_m), y_m) &= -\widetilde{w}_m f(x_m) y_m, ~\mathrm{where} \\
\widetilde{w}_m &= \frac{Mp_s(x_m)}{p_s(y_m=i)} ~ \mathrm{if} ~ y_m=i.
\end{aligned}
\end{equation}

Therefore, the Hilbert-Schmidt distance for two states $\widetilde{\rho}$ and $\widetilde{\sigma}$ is related to the empirical risk,
\begin{equation}
\hat{R}_{L_{\widetilde{w}}, \mathcal{D}}(\widetilde{f}_{\mathrm{fid}}) = -D_{\mathrm{hs}}(\widetilde{\rho}, \widetilde{\sigma}).  \label{eq:NewEmpRisk}
\end{equation}
It is shown that minimizing an empirical risk is equivalent to maximizing a Hilbert-Schmidt distance of two states.

\section{Demonstration}

\begin{figure*}[t]
	\begin{center}
		\includegraphics[angle=0, width=.78 \textwidth]{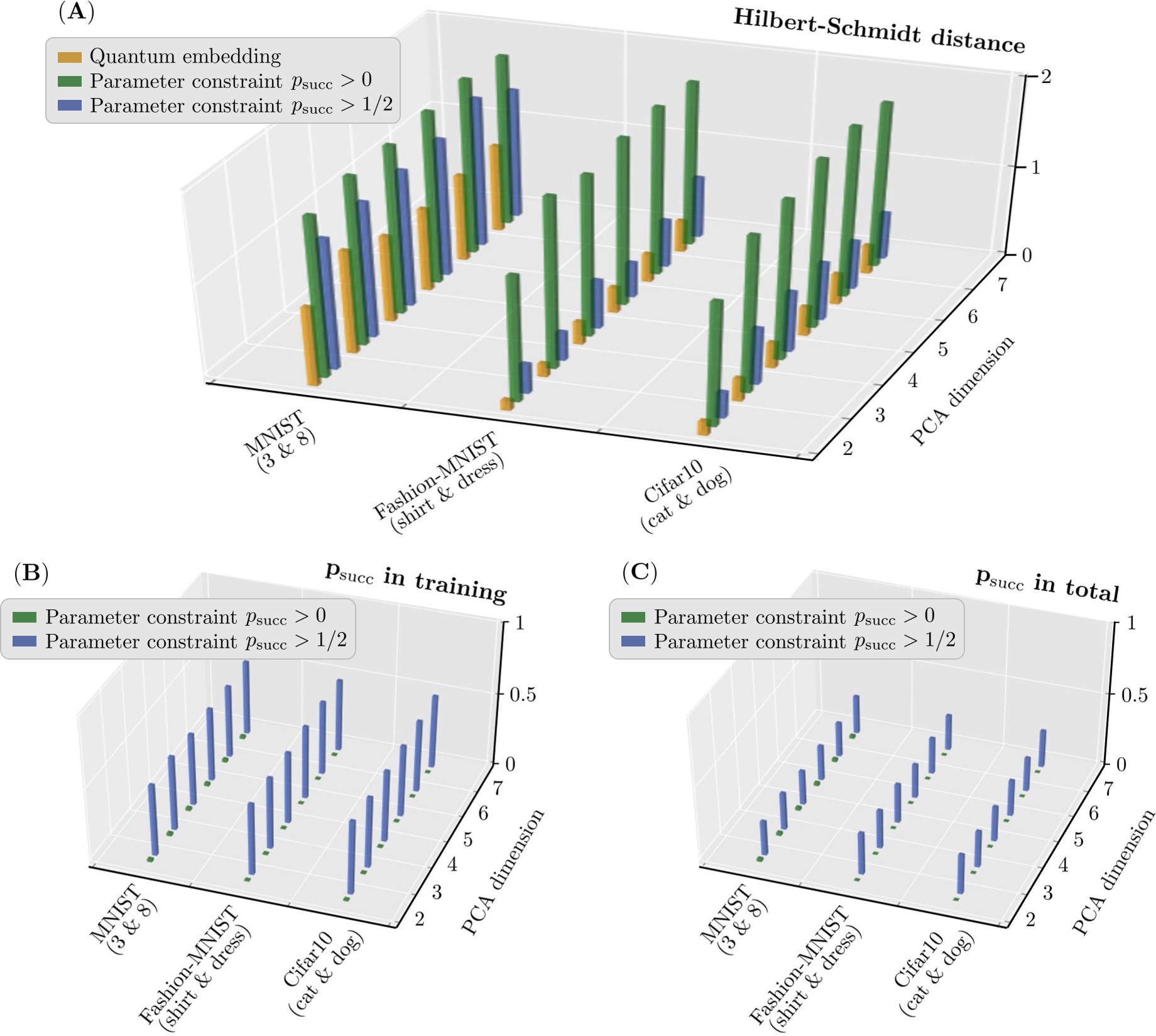}
		\caption{ (A) The enhancement of the classification is demonstrated for various PCA dimensions and datasets: MNIST, fashion-MNIST, and Cifar10. The parameter $c$ is from the constraint of a cut-off probability in \eqref{eq:CutEmpRisk}. We set the parameters $c=0$ and $c=1/2$ for the constraints $p_{\mathrm{succ}}>0$ and $p_{\mathrm{succ}}>1/2$, respectively. The proposed feature map for quantum data (green, blue) shows an improvement in the Hilbert-Schmidt distance over quantum embedding (orange) in all datasets. A constraint on the parameter $c$ in \eqref{eq:CutEmpRisk} can maintain a success probability higher and, at the same time, can lower an improved Hilbert-Schmidt distance. (B) A success probability in training can be significantly low (green). A constraint that $p_{\mathrm{succ}}>1/2$ by setting the cut-off parameter $c=1/2$ boosts the success probability higher than $1/2$ (blue). (C) Total success probabilities, that is, successful quantum feature maps in both training and test quantum data, are shown for all datasets. The constraint with the cut-off parameter can maintain a success probability higher.} \label{fig:SimResult}
	\end{center}	
\end{figure*}

In this section, we demonstrate the quantum binary classification with a feature map for quantum data, see Fig \ref{fig:OverScheQuanBinaClassification}. A quantum state can be prepared by a quantum embedding strategy that maps classical data $x$ to a quantum state $|\psi_x\rangle$ by a unitary transformation. An amplitude encoding is an instance as follows,
\begin{equation}
\rho_{\phi}(x_m) = |\psi_{x_m}\rangle\langle \psi_{x_m}|, ~\mathrm{where}~ |\psi_{x_m}\rangle = \sum_{i=0}^{N-1}x_{m,i}|i\rangle
\end{equation}
To apply a SWAP-test-based classifier, we write by $W$ a quantum circuit for the state preparation \cite{Schuld_2017},
\begin{equation}
W |0\rangle^{\otimes l} = \frac{1}{\sqrt{M}}\sum_{m=1}^{M} |m\rangle_L |\psi_{x_m}\rangle_T |\psi_{x} \rangle_t |0\rangle_S |s_m\rangle_C, \label{eq:InitState}
\end{equation}
where $l =(2+ \log M+ 2\log N)$ is the total number of qubits, see also Fig. \ref{fig:QuaCircuit}. The first register (L) labels input data, the second one (T) collects training quantum data, the third one (t) contains a test state, the fourth one (S) is an ancilla needed for a SWAP test, and the last one (C) denotes a binary classification of training data: $|s_m\rangle = |0\rangle$ for $y_m=1$ and $|s_m\rangle = |1\rangle$ for $y_m=-1$. Quantum registers (S) and (C) contain single qubits.

A feature map for quantum data, devised in \eqref{eq:KraForm}, applies to quantum data in \eqref{eq:InitState} to enhance the quantum binary classification. The map is facilitated by a unitary transformation $V(\theta)$ for each training and test data as well as an ancilla $F$.
\begin{equation}
\begin{aligned}
&V (\theta) \otimes V (\theta) \sum_{m=1}^{M}   |\psi_{x_m}\rangle_T |0\rangle_{F_T} |\psi_{x} \rangle_t |0\rangle_{F_t}  \nonumber \\
&= \sum_{m=1}^{M}  K(\theta) |\psi_{x_m}\rangle_T |0\rangle_{F_T}  K(\theta) |\psi_{x} \rangle_t |0\rangle_{F_t} \nonumber \\
&+ \sum_{m=1}^{M}  K(\theta) |\psi_{x_m}\rangle_T |0\rangle_{F_T}  K_0(\theta) |\psi_{x} \rangle_t |1\rangle_{F_t} \nonumber \\
& + \sum_{m=1}^{M}  K_0(\theta) |\psi_{x_m}\rangle_T |1\rangle_{F_T}  K(\theta) |\psi_{x} \rangle_t |0\rangle_{F_t} \nonumber \\
&+ \sum_{m=1}^{M}  K_0(\theta) |\psi_{x_m}\rangle_T |1\rangle_{F_T}  K_0(\theta) |\psi_{x} \rangle_t |1\rangle_{F_t} \nonumber
\end{aligned}
\end{equation}
An interaction $V(\theta)$ is to be trained to minimize the empirical risk in \eqref{eq:NewEmpRisk}. Once a desired one $K(\theta)$ is realized with an ancilla state $|0\rangle_F$, ensembles of training data are transformed to  $\widetilde{\rho}$ and $\widetilde{\sigma}$, respectively. Resulting quantum data with $|1\rangle_F$ that is not of interest will be discarded. In practice, an efficient unitary $V(\theta)$ may be designed as a parameterized quantum circuit and trained to minimize the cost function 
\begin{eqnarray}
{C}(\theta)={\hat{R}}_{L_{\widetilde{w}}, \mathcal{D}}({\widetilde{f}})_{\mathrm{}}.\label{eq:cost} 
\end{eqnarray}
If a feature map by a Kraus operator $K(\theta)$ cannot make any advantage, an optimal parameter after training would return $V(\theta) =\mathbb{I}$.

Then, a Hadamard classifier applies a SWAP test for states in registers $T$ and $t$. Qubits in registers in $S$ and $C$ are measured in the computational basis $\{| 0\rangle_z, |1\rangle_z \}$, respectively. The classifier in \eqref{eq:NewFidClassifier} can be constructed from outcomes in the registers. When an outcome $0$ $(1)$ is obtained in register $C$, states in registers $T$ are given by $\widetilde{\rho}$ $(\widetilde{\sigma})$ in \eqref{eq:NewState}. Given an outcome $0$ in register $C$, a measurement is performed in register $S$: probabilities of $0$ and $1$ are given by $(1\pm \langle \widetilde{\psi}_x | \widetilde{\rho} |\widetilde{\psi}_x\rangle )/2$, respectively. When an outcome in register $C$ is $1$, probabilities of $0$ and $1$ in register $S$ are given by $(1\pm \langle \widetilde{\psi}_x | \widetilde{\sigma} | \widetilde{\psi}_x\rangle )/2$. Hence, from the probabilities, one can find the classification function in \eqref{eq:NewFidClassifier} as  
\begin{equation}
\widetilde{f}_{\mathrm{fid}}(x) =  \langle \widetilde{\psi}_x | \widetilde{\rho} - \widetilde{\sigma} | \widetilde{\psi}_x \rangle.  \label{eq:qf}
\end{equation}

The quantum circuit that minimizes the empirical risk, or equivalently the cost function, shares a similar structure in their circuit constructions with the fidelity classifier. A distinction is that the gate $\widetilde{W}$ prepares a state for the training data as follows,
\begin{equation}
\widetilde{W}|0\rangle^{\otimes l'} = \frac{1}{\sqrt{M}}\sum_{m=1}^{M} |m\rangle_L |\psi_{x_m}\rangle_T |s_m \rangle_C, \label{eq:EmpInitState}
\end{equation}
where $l' = (1 + \log M + \log N)$ is the total number of qubits, see also Fig. \ref{fig:QuaCircuit}. The SWAP test between registers $T1$ and $T2$ results in the fidelity of states $\widetilde{\rho}$ and $\widetilde{\sigma}$, which are obtained according to an outcome in the register $C$. That is, a measurement of an observable $\sigma_z \otimes \sigma_z$ on the registers $C1$ and $C2$ finds the Hilbert-Schmidt distance of states $\widetilde{\rho}$ and $\widetilde{\sigma}$. Iterations of the steps optimize parameters in the Kraus operator so that the cost function converges to a local minimum value.

We consider the Iris and the modified National Institute of Standard and Technology (MNIST) dataset to demonstrate the advantage of a feature map for quantum data. For the Iris dataset, the Iris versicolor and Iris setosa are set as class A and B, respectively. Then, sepal lengths and sepal widths are given as input data. We select the $36$-th data in the Iris versicolor and the $34$-th data in the Iris setosa as the training dataset. We set the $29$-th data in the Iris versicolor for the test one. We have data in the following after normalization,
\begin{equation}
\begin{aligned}
\mathrm{Training~Data}~ : ~ \{((0.796, 0.607), 1), ((0,1), -1)\} \\
\mathrm{Test~Data}~ : ~ ((-0.557, 0.83), -1)
\end{aligned}
\end{equation}
We apply the amplitude encoding in the state preparation,
\begin{equation}
|\psi_{x_m}\rangle = R_y[2\cos^{-1}(x_{m,0})]|0\rangle. \nonumber
\end{equation}
In the case of the MNIST dataset, we select the images of $"3"$ and $"6"$ for binary classification. We rescale each image from $28 \times 28$ pixels to $4 \times 4$ pixels \cite{farhi2018classification}. Then, a $16$-dimensional state is prepared by amplitude encoding. In both cases, a feature map is trained by optimizing the empirical risk in \eqref{eq:NewEmpRisk}. The demonstration is summarized in Table \ref{table:Sim1}. A particular choice  $V(\theta)=\mathbb{I}$ corresponds to quantum embedding itself. We obtain success probabilities as $0.309$ and $0.331$ in for the IRIS and MNIST datasets, respectively.

\setlength{\tabcolsep}{3pt}
\renewcommand{\arraystretch}{1.2}
\begin{table}[b]
\caption{ Demonstration result of the feature map for quantum data. }
\begin{center}
\begin{tabular}{|c|c|c|c|c|}
\hline
\textbf{}&\multicolumn{2}{|c|}{\textbf{Iris Dataset}}&\multicolumn{2}{|c|}{\textbf{MNIST Dataset}} \\
\cline{2-5} 
\textbf{} & $V(\theta)=\mathbb{I}$ & $V(\theta) \neq \mathbb{I}$ & $V(\theta)=\mathbb{I}$ & $V(\theta) \neq \mathbb{I}$ \\
\hline
Empirical Risk & $-1.307$ & $-2.756$ & $-1.615$ & $-2.113$  \\
\hline
Classifier & $-0.718$ & $-0.987$ & $-0.113$ & $-0.414$ \\
\hline
Success Probability & $-$ & $0.309$ & $-$ & $0.331$  \\
\hline
Decision & $-1$ & $-1$ & $-1$ & $-1$  \\
\hline
\end{tabular}
\label{table:Sim1}
\end{center}
\end{table}

From an instance considered above, it turns out that the success probability of filtering quantum data decreases while the empirical risk improves. We suggest that one way to maintain a sufficiently high success probability is to leave it as a constraint in the empirical risk as follows,
\begin{equation}
\hat{R}'_{L_{\widetilde{w}}, \mathcal{D}}(f_{\mathrm{fid}}) = -\mathcal{D}_{\mathrm{hs}}(\widetilde{\rho}, \widetilde{\sigma}) + \lambda \max(0, c-p_{\mathrm{succ}}), \label{eq:CutEmpRisk}
\end{equation}
where $\lambda$ is a positive hyperparameter and $c\in[0, 1]$ is a cutoff probability. 
%\textcolor{blue}{ A suitable parameter $\lambda$ may allow to hold a success probability $p_{\mathrm{succ}}$ higher than $c$.}

We further consider real-world image datasets with the constraint of a cutoff probability. We optimize both the feature map for quantum data and quantum embedding as in \cite{lloyd2020quantum}, see Fig. \ref{fig:SimQuaCircuit}. The empirical risk and success probabilities are then compared in three instances: quantum embedding, the feature map for quantum data, and a non-zero cutoff probability. We select three datasets for demonstrating the advantage of a feature map: i) handwritten images of $"3"$ and $"8"$ from the MNIST dataset, ii) images of shirts and dresses from the fashion-MNIST dataset, and iii) images of cats and dogs from the Cifar10 dataset. In the demonstration, $20$ data for each label are used for training, and the first $d$ PCA components are trained for quantum embedding. The results in Fig. \ref{fig:SimResult} show that a feature map for quantum data is more efficient than quantum embedding itself. Consequently, quantum data becomes more distinguishable in the classifier. Although a feature map for quantum data is genuinely probabilistic, adding the constraint of a cutoff probability to empirical risk can lead to a high enough success probability.

\section{Conclusion}
Finally, we reiterate that a {\it feature map for quantum data} presents a strategy for enhancing an ML algorithm for quantum states prepared according to a quantum feature map. It could be compared with the  {\it quantum embedding} of classical data shown in \cite{lloyd2020quantum}, which formulates training the state preparation to improve ML algorithms with quantum states. It is worth mentioning that, with some probability, our results always improve a quantum feature map or the quantum embedding, since a unitary transformation is an instance of a Kraus operator. Further enhancements of an ML algorithm may be envisaged by combining both strategies as shown in the numerical demonstration. Hence, a feature map for quantum data is a versatile tool to enhance existing ML algorithms. 

In conclusion, we have established a general framework for manipulating quantum data and then shown its applications to an ML algorithm for binary classification. We have also developed a circuit construction for a feature map for quantum data. In particular, our results shed light on the opportunity of restructuring quantum data on a Hilbert space beyond the constraint of contractivity. With some probability, quantum states can be distributed in a quantum feature space such that they are better distinguishable. We have applied the results to supervised learning for binary classification and demonstrated enhancements in the empirical risk. Our results present a versatile tool readily applicable to improve existing quantum ML algorithms.

\section*{Acknowledgment}
This work is supported by the National Research Foundation of Korea (Grant No. NRF-2021R1A2C2006309, NRF-2022M1A3C2069728, RS-2024-00408613, RS-2023-00257994) and the Institute for Information \& Communication Technology Promotion (IITP) (RS-2023-00229524).

%\bibliographystyle{IEEEtran}
%\bibliography{reff.bib}

\end{document}